\newcommand{\vc}[1]{\bm{#1}}
\newcommand{\mat}[1]{{\mathbf #1}}  
\newcommand{\der}[2]{\frac{{\mathrm d}#1}{{\mathrm d}#2}}
\newcommand{\pder}[2]{\frac{\partial #1}{\partial #2}}
\newcommand{\dr}[2]{{\mathrm d}#1/{\mathrm d}#2}
\newcommand{\pdr}[2]{\partial #1/\partial #2}
\newcommand{\dd}{\,{\mathrm d}}
\newcommand{\abs}{\mathop{\mathgroup\symoperators abs}\nolimits}
\providecommand{\e}{\mathrm e} 
\newcommand{\E}{\mathop{\mathgroup\symoperators E}\nolimits}
\newcommand{\Var}{\mathop{\mathgroup\symoperators Var}\nolimits}
\newcommand{\Cov}{\mathop{\mathgroup\symoperators Cov}\nolimits}
\newtheorem{proposition}{Proposition}
\newcommand{\fig}[3][]{\begin{figure}[htbp]\leavevmode\centering%
\includegraphics[width=#1\textwidth]{#2.pdf}\caption{\small #3}\label{fig:#2}\end{figure}}
  \def\ps@pprintTitle{%
     \let\@oddhead\@empty
     \let\@evenhead\@empty
     \def\@oddfoot{\footnotesize\itshape
       \@journal}%
     \let\@evenfoot\@oddfoot}
\journal{Manuscript in review \hfill May 2018}
\begin{document}

\begin{frontmatter}

\title{Estimating reducible stochastic differential equations by\\ conversion to a least-squares problem}

\author{Oscar Garc\'{\i}a}

\address{Dasometrics, Conc\'on, Chile}
\ead{garcia@dasometrics.net}
\fntext[]{ORCID: 0000-0002-8995-1341}

\begin{abstract}
Stochastic differential equations (SDEs) are increasingly used in longitudinal data analysis, compartmental models, growth modelling, and other applications in a number of disciplines. Parameter estimation, however, currently requires specialized software packages that can be difficult to use and understand. This work develops and demonstrates an approach for estimating reducible SDEs using standard nonlinear least squares or mixed-effects software. Reducible SDEs are obtained through a change of variables in linear SDEs, and are sufficiently flexible for modelling many situations.
The approach is based on extending a known technique that converts maximum likelihood estimation for a Gaussian model with a nonlinear transformation of the dependent variable into an equivalent least-squares problem. A similar idea can be used for Bayesian maximum a posteriori estimation.
It is shown how to obtain parameter estimates for reducible SDEs containing both process and observation noise, including hierarchical models with either fixed or random group parameters. Code and examples in \textit{R} are given. Univariate SDEs are discussed in detail, with extensions to the multivariate case outlined more briefly.
The use of well tested and familiar standard software should make SDE modelling more transparent and accessible.
\end{abstract}

\begin{keyword}
stochastic processes \sep longitudinal data \sep growth curves \sep compartmental models \sep mixed-effects \sep \textit{R}
\end{keyword}

\end{frontmatter}

\setcounter{tocdepth}{4}
\tableofcontents

\section{Introduction}
\label{sec:intro}

Stochastic differential equation (SDE) models incorporate random elements into ordinary differential equations, representing the effects of a noisy environment on rates of change. In addition to their traditional applications in physics and engineering \citep{gardiner94}, SDEs are becoming routine in finance \citep[][Sec.~7.3]{kunita10,kloeden92} and pharmacokinetics \citep{kristensen05,klim09,donnet13}, and are increasingly used in fields like econometrics \citep{paige10,bu10,bu16}, animal growth \citep{lv07,strathe09,filipe10}, oncology \citep{sen89,favetto10}, and forestry \citep{garcia83,broad06,lodgepole}. Other applications include \citet{artzrouni90,cleur00,driver17}. In particular, deterministic growth curves are commonly fitted to repeated measurements assuming that observation error is the only source of randomness \citep{davidian03}, but an SDE growth model can be more realistic \citep{hotelling,donnet10}. In addition to environmental or process noise, some models include also observation errors \citep{garcia83,donnet10,donnet13}. Another extension of the basic SDE is to hierarchical models with repeated measurements, where some parameters may be specific to each individual or group. Such \emph{local} parameters can be fixed numbers \citep{garcia83} or, more often, are viewed as random  in mixed-effects models \citep{donnet08,klim09,picchini10,driver17}.

Parameter estimation for general SDEs is a difficult problem, with most algorithms employing approximations to maximum likelihood \citep[e.~g.,][]{nielsen00,bishwal08,picchini10,donnet13}. Bayesian methods have also been proposed \citep{donnet10,brouste14,king16,whitaker17}. \textit{R} packages implementing estimation for various types of SDE models include \textit{ctsem}, \textit{dynr}, \textit{mixedsde}, \textit{msde}, \textit{pomp}, \textit{sde}, \textit{Sim.DiffProc}, and \textit{yuima}  \citep{cran}; \textit{pomp} and \textit{yuima} include Bayesian procedures. An alternative method is presented here for SDEs that can be reduced to linear by a change of variables. The approach is based on an extension of ideas of \citet{box-cox} and \citet{furnival61}, which convert maximum-likelihood (ML) estimation for a transformed Gaussian model into a least-squares problem. The same principle can be used to obtain Bayesian maximum a posteriori (MAP) estimates. The sum of squares can be minimized with standard statistical software like the \texttt{nls} function in \textit{R}, or \texttt{nlme} for mixed-effects formulations \citep{pinheiro00}.

The general methodology is described in the next section, followed by its application to SDE models with observation errors in Section 3. Examples in \textit{R} are given. The SDE models may include hierarchical structures with group fixed or random effects (Sec.~\ref{sec:hier}). To simplify, only univariate SDEs are discussed in full detail, with extensions to vector-valued multiresponse models treated more briefly in Section \ref{sec:multi}. A Conclusions section closes the article.

\section{Transformed Gaussian model}
\label{sec:tgm}

\subsection{Example 1}
\label{sec:ex1}

\citet[][Sec.~8.7]{venables02} describe a dataset containing measurements of the concentration of a chemical GAG in the urine of 314 children of various ages. Assume that one wants an equation summarizing the relationship between GAG concentration (variable $y$) and age (variable $x$). Plotting shows that the relationship is nonlinear, so that a simple linear regression $y = \beta_0 + \beta_1 x$ is unsatisfactory. A common strategy is to try regressions on transformations of $x$ and/or $y$, such as logarithms or various positive or negative powers. We shall find ``good'' exponents for powers in both $x$ and $y$.

Instead of a simple power transformation, it is slightly more convenient to use
\begin{equation} \label{eq:bc}
  y^{(\lambda)} = \begin{cases}
    \frac{y^\lambda - 1}{\lambda} & \text{if } \lambda \neq 0 \\
    \ln y & \text{if } \lambda = 0
    \end{cases}
\end{equation}
\citep{box-cox}. This is an increasing function of $y$ for any $\lambda$, it is continuous in $\lambda$ at $\lambda = 0$, and includes the logarithmic transformation as a special case. Physicists call eq.~\eqref{eq:bc} a \emph{generalized logarithm} \citep{martinez08}.

The model is then
\begin{equation} \label{eq:ex1}
  y_i^{(\lambda_y)} = \beta_0 + \beta_1 x_i^{(\lambda_x)} + \epsilon_i \;.
\end{equation}
Assuming that the $\epsilon_i$ are independent and normally distributed, it is possible to write down the likelihood function and use a general optimization algorithm to find the maximum likelihood (ML) estimates for the parameters $\beta_0, \beta_1, \lambda_x, \lambda_y$, and $\sigma^2 = \Var[\epsilon_i]$. However, the optimization can be time-consuming and ill-conditioned, requiring good starting points to achieve convergence. \citet{box-cox} showed a way of converting the likelihod maximization into a least-squares solution, which is more efficient and stable. \citet{furnival61} had used the same idea to devise an index for comparing dependent variable transformations. Specifically, \citet{box-cox} considered various one- and two-parameter transformations on the left-hand side and a linear right-hand side, solving by ordinary linear least squares. Estimation in eq.~\eqref{eq:ex1} is similar, except that with a free $\lambda_x$ it requires a non-linear least-squares procedure.

The next two sections develop an extension of the Box--Cox--Furnival approach, needed for the application to SDEs. The ML estimation for Example 1, both by the method of \citet{box-cox} and by direct maximization of the likelihood, is presented in Section \ref{sec:sol1}.

\subsection{Generalized formulation}
\label{sec:gen}

Consider a vector of $n$ observations $\vc y = (y_1, \ldots, y_n)$, and a transformation
\begin{equation*}
   \vc z = \vc\varphi(\vc y, \vc \theta_1)
\end{equation*}
that may depend on a vector of unknown parameters $\vc\theta_1$. Assume that $\vc\varphi$ is one-to-one and differentiable over the admissible domain, and that the transformed variables follow a Gaussian nonlinear model
\begin{equation*}
  z_i = f_i(\vc\theta_2) + \epsilon_i \;,
\end{equation*}
where the $\epsilon_i$ are independent normally distributed with mean 0 and a common variance $\sigma^2$. In \citet{box-cox} and in Section \ref{sec:ex1} the transformations are of the simpler form $z_i = \varphi(y_i, \vc\theta_1)$, $f_i$ takes the form $f(\vc x_i, \vc\theta_2)$ where $\vc x_i$ is a vector of predictors, and the parameters in $\vc\theta_1$ and in $\vc\theta_2$ are different. Those restrictions are not imposed here. In particular, we denote by $\vc\theta$ the vector containing the union of the elements from $\vc\theta_1$ and $\vc\theta_2$, some of which may be the same, and write
\begin{equation} \label{eq:trans}
   \vc z = \vc\varphi(\vc y, \vc \theta)
\end{equation}
and
\begin{equation} \label{eq:normal}
  z_i = f_i(\vc\theta) + \epsilon_i \;,
\end{equation}

There is no loss in generality from the assumption of homoscedastic uncorrelated errors in eq.~\eqref{eq:normal}, because any other correlation structure can be accommodated through a linear transformation (Sec.~\ref{sec:corr}).

\subsection{Maximum likelihood}
\label{sec:ml}

The following extends the results of \citet{box-cox} and \citet{furnival61} to the more general formulation of eqns.~\eqref{eq:trans}--\eqref{eq:normal}:
\begin{proposition}
 Let
\begin{equation} \label{eq:J}
  J = J(\vc y, \vc\theta) = \abs\left(\left\lvert\pder{\vc \varphi}{\vc y}\right\rvert\right)
  = \abs\left(\left\lvert\left\{\pder{\varphi_i}{y_j}\right\}\right\rvert\right)
\end{equation}
be the absolute value of the Jacobian determinant of the transformation \eqref{eq:trans}, and write
\begin{equation} \label{eq:u}
  u_i = \frac{\epsilon_i}{J^{1/n}} =
    \frac{\varphi_i(\vc y, \vc \theta) - f_i(\vc\theta)}{J(\vc y, \vc\theta)^{1/n}} \;.
\end{equation}
Then, the ML estimate of $\vc\theta$ is the value $\hat{\vc\theta}$ that minimizes the sum of squares $S(\vc\theta) = \sum u_i^2$. The ML estimate of $\sigma^2$ is
\begin{equation} \label{eq:s2hat}
  \hat{\sigma^2} = J(\vc y, \hat{\vc\theta})^{2/n} \frac{S(\hat{\vc\theta})}{n} \;.
\end{equation}
\end{proposition}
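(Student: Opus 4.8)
The plan is to form the likelihood of the observed data $\vc y$ by pushing the Gaussian density of $\vc z$ back through the transformation \eqref{eq:trans}, then profile out $\sigma^2$ and reduce the remaining maximization over $\vc\theta$ to a least-squares minimization.

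First, since the $\epsilon_i$ in \eqref{eq:normal} are independent $N(0,\sigma^2)$, the vector $\vc z$ has the multivariate normal density
\begin{equation*}
  p_{\vc z}(\vc z) = (2\pi\sigma^2)^{-n/2}\exp\!\left(-\frac{1}{2\sigma^2}\sum_i\bigl(z_i - f_i(\vc\theta)\bigr)^2\right).
\end{equation*}
Because $\vc\varphi$ is one-to-one and differentiable, the change-of-variables formula gives the density of $\vc y$ as $p_{\vc y}(\vc y) = p_{\vc z}(\vc\varphi(\vc y,\vc\theta))\,J$, with $J$ the absolute Jacobian determinant of \eqref{eq:J}. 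Evaluated at the observed $\vc y$ this is the likelihood $L(\vc\theta,\sigma^2)$, and writing $R(\vc\theta) = \sum_i\bigl(\varphi_i(\vc y,\vc\theta) - f_i(\vc\theta)\bigr)^2$ for the residual sum of squares on the $z$-scale, the log-likelihood is
\begin{equation*}
  \ell(\vc\theta,\sigma^2) = -\tfrac{n}{2}\ln(2\pi) - \tfrac{n}{2}\ln\sigma^2 - \frac{R(\vc\theta)}{2\sigma^2} + \ln J(\vc y,\vc\theta).
\end{equation*}

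Next I would profile out $\sigma^2$. For fixed $\vc\theta$, setting $\partial\ell/\partial\sigma^2 = 0$ gives the maximizing value $\hat\sigma^2(\vc\theta) = R(\vc\theta)/n$; substituting it back collapses the quadratic term to the constant $-n/2$ and yields the profile log-likelihood
\begin{equation*}
  \ell_p(\vc\theta) = \text{const} - \tfrac{n}{2}\ln R(\vc\theta) + \ln J(\vc y,\vc\theta).
\end{equation*}
Multiplying by $2/n$ and exponentiating (both monotone operations, so the maximizer is unchanged) shows that maximizing $\ell_p$ is equivalent to minimizing $R(\vc\theta)/J(\vc y,\vc\theta)^{2/n} = \sum_i u_i^2 = S(\vc\theta)$, with $u_i$ as in \eqref{eq:u}. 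This identifies $\hat{\vc\theta}$. Finally, since $R(\vc\theta) = J^{2/n}S(\vc\theta)$, evaluating $\hat\sigma^2(\vc\theta)$ at $\hat{\vc\theta}$ gives \eqref{eq:s2hat}.

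The argument is essentially routine, and the step I expect to need the most care is the Jacobian bookkeeping. The generalization over \citet{box-cox} is precisely that $\vc\theta_1$ and $\vc\theta_2$ may share components and that $f_i$ need not be linear, so $J$ depends on the \emph{same} parameters as the mean structure and the problem no longer separates into ``transform, then regress.'' What makes the reduction go through regardless is that profiling $\sigma^2$ and taking logarithms are insensitive to how $\vc\theta$ enters $R$ and $J$; I would therefore state the change-of-variables step carefully enough to make explicit that it holds for a $\vc\theta$-dependent transformation.
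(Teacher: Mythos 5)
Your proposal is correct and takes essentially the same approach as the paper: both obtain the likelihood of $\vc y$ from the Gaussian density of $\vc z$ via the change-of-variables formula, profile out the variance for fixed $\vc\theta$, and use monotonicity to reduce the remaining maximization to minimizing $S(\vc\theta)=\sum u_i^2$, with \eqref{eq:s2hat} recovered from the profiled variance. The only difference is cosmetic: the paper substitutes $\epsilon_i = J^{1/n}u_i$ and $\sigma = J^{1/n}\sigma_u$ so that $L$ takes the exact form of a normal density before concentrating out $\sigma_u^2$, whereas you profile $\sigma^2$ directly and fold $\ln J$ into $\ln R(\vc\theta)$ afterwards --- the algebra is identical.
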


\begin{proof}
The probability density of $\vc y$, and hence the likelihood in terms of the original observations, is the product of the density of $\vc z$ and the absolute value of the  Jacobian of the transformation \citep[e.~g.,][Sec.~2.8]{wilks62},
\begin{equation} \label{eq:Lraw}
  L = \frac{1}{(2\pi)^{n/2} \sigma^n} \exp\left(- \frac{\sum \epsilon_i^2}{2 \sigma^2}\right) J \;.
\end{equation}
Substituting $\epsilon_i = J^{1/n} u_i$ (from eq.~\eqref{eq:u}) and $\sigma = J^{1/n} \sigma_u$,
\begin{equation} \label{eq:L}
  L = \frac{1}{(2\pi)^{n/2} \sigma_u^n} \exp\left(- \frac{\sum u_i^2}{2 \sigma_u^2}\right) \;.
\end{equation}
This has the form of a normal density, and therefore the standard normal theory applies \citep[e.~g.,][Sec.~2.2.1]{seber-wild}. Briefly, equating to 0 the derivative of $L$ with respect to $\sigma_u^2$ it is found that, for given $\vc\theta$, the likelihood is maximized by
\begin{equation} \label{eq:s2}
  \sigma_u^2 = \frac{\sum u_i^2}{n} \;.
\end{equation}
Substituting into eq.~\eqref{eq:L},
\begin{equation*}
  \max_{\sigma_u^2} L = \left(\frac{n}{2 \pi \sum u_i^2}\right)^{n/2} \exp(-n/2) \;.
\end{equation*}
Clearly, the ML estimate $\hat{\vc\theta}$ can be calculated minimizing over $\vc\theta$ the sum of squares $\sum u_i(\vc y, \vc\theta)^2$. Equation \eqref{eq:s2} and $\sigma^2 = J^{2/n} \sigma_u^2$ give eq.~\eqref{eq:s2hat}.
\end{proof}

An alternative variance estimator with $n - p$ instead of $n$ in the denominator of eq.~\eqref{eq:s2hat} is often used, which generally has less bias but a larger mean square error. It may be worth noting that in general the $u_i$ are not Gaussian, even though the likelihood takes that form.

\subsection{Solution of Example 1}
\label{sec:sol1}

Eq.~\eqref{eq:ex1} corresponds to \eqref{eq:trans}--\eqref{eq:normal} with $z_i = y_i^{(\lambda_y)}$and $\vc\theta = (\beta_0, \beta_1, \lambda_x, \lambda_y)$.

In this instance the Jacobian matrix is diagonal, and the determinant is the product of the diagonal elements $\pdr{z_i}{y_i}$. Then, as shown by \citet{furnival61} and by \citet{box-cox},
\begin{equation} \label{eq:bcJnroot}
  J^{1/n} = {\dot y}^{\lambda - 1} \;,
\end{equation}
where $\dot y$ denotes the geometric mean of the $y_i$.

\textit{R} commands for finding the solution are given in \ref{app:sol1}. First, the exponents are fixed as $\lambda_x = 1$ and $\lambda_y = 0$, and a simple linear regression $\ln y = \beta_0 + \beta_1 x$ is used to produce suitable starting points for the optimization. Then, the sum of squares of the $u_i$ from eq.~\eqref{eq:u} is minimized with the nonlinear regression procedure \texttt{nls}, making use of eq.~\eqref{eq:bcJnroot}. The ML estimates are found to be
\[ \beta_0 = 3.314 \;,\quad \beta_1 = -0.3502 \;,\quad
   \lambda_x = 0.4249 \;,\quad \lambda_y = 0.1032 \;. \]
From eq.~\eqref{eq:s2hat}, the ML variance estimate is $\sigma^2 = 0.3839$.

A more convenient and parsimonious relationship can be obtained by rounding $\lambda_x = 0.5, \lambda_y = 0$, and using the linear regression
\[
  \ln y = 3.366 - 0.5069 \sqrt{x} \;.
\]

It is also shown in \ref{app:sol1} that the same estimates are obtained by direct minimization of the negative log-likelihood from eq.~\eqref{eq:Lraw}, confirming the validity of the Box--Cox approach.

\subsection{Correlation}
\label{sec:corr}

Suppose that the $\epsilon_i$ in eq.~\eqref{eq:normal} are not independent. Let the variance-covariance matrix of the vector $\vc\epsilon = (\epsilon_1, \ldots, \epsilon_n)$ be
\begin{equation} \label{eq:cov}
  \Var(\vc\epsilon) = \sigma^2 \mat C \;,
\end{equation}
where the matrix $\mat C$ may depend on unknown parameters. The Cholesky factorization gives $\mat C$ as the product of a lower-triangular matrix $\mat L$ and its transpose:
\begin{equation} \label{eq:cholesky}
  \mat C = \mat L \mat L' \;.
\end{equation}
It follows that the elements of $\mat L^{-1} \vc \epsilon$ are uncorrelated, and so are those of $\mat L^{-1} \vc z$  \citep[e.~g.,][Sec.~2.1.4, where the upper-triangular Cholesky factor $\mat U = \mat L'$ is used]{seber-wild}. Therefore, with the substitution $\vc\varphi(\vc y, \vc\theta) \to \mat L^{-1} \vc\varphi(\vc y, \vc\theta)$ in eq.~\eqref{eq:trans} we are in the same situation as before. Eq.~\eqref{eq:J} becomes
\begin{equation} \label{eq:Jcorr}
  J = \abs\left(\left\lvert\pder{\mat L^{-1} \vc \varphi}{\vc y}\right\rvert\right)
    = \abs\left(\left\lvert\left\{\pder{\varphi_i}{y_j}\right\}\right\rvert\right)
      / \lvert\mat L\rvert \;,
\end{equation}
with
\[
  \lvert\mat L\rvert = \prod_{i=1}^n l_{ii}
\]
since the determinant of a triangular matrix is the product of the elements on the diagonal, and those elements $l_{ii}$ in $\mat L$ are positive.
In practice it is better to use logarithms to prevent numerical overflow or underflow:
\begin{equation} \label{eq:detL}
  \lvert\mat L\rvert^{1/n} = \exp\left(\tfrac{1}{n} \sum_{i=1}^n \ln l_{ii}\right) \;.
\end{equation}
With this $J$, the vector of $u_i$ for the sum of squares is now
\begin{equation} \label{eq:ucorr}
  \vc u = \frac{\mat L^{-1} \vc\epsilon}{J^{1/n}}
\end{equation}
(c.~f.~eq.~\eqref{eq:u}).

\subsection{Maximum a posteriori estimation}
\label{sec:map}

In Bayesian MAP estimation the likelihood is weighted by an \emph{a priori} parameter distribution density:
\[
  \max_{\vc\theta, \sigma^2} L(\vc\theta, \sigma^2) p(\vc\theta) \;.
\]
If $L$ is Gaussian, this has the same form as eq.~\eqref{eq:Lraw} with $p(\vc\theta)$ in place of $J$. Therefore, analogously to Sec.~\ref{sec:ml}, the MAP estimate of $\vc\theta$ is the value $\hat{\vc\theta}$ that minimizes the sum of squares $S(\vc\theta) = \sum (\epsilon_i / p^{1/n})^2$, and the estimate of $\sigma^2$ is $\hat{\sigma^2} = p(\hat{\vc\theta})^{2/n} S(\hat{\vc\theta}) / n$. Of course, this can be combined with variable transformations as above.

\section{Stochastic differential equations}
\label{sec:sdes}

The behavior of systems that evolve in time is often modelled by a relationship betwn a magnitude $x$ and its rate of change, given by a differential equation $\dr{x}{t} = f(x)$. In many situations the rate is subject to perturbations, due to varying environmental conditions, that cause substantial deviations from the predicted trajectory. It is then more realistic to include noise, represented by a random function $\epsilon(t)$:
\begin{equation} \label{eq:langevin}
  \der{x}{t} = f(x) + g(x) \epsilon(t) \;.
\end{equation}
In particular, $\epsilon$ may be assumed to be \emph{white noise}, where the values $\epsilon(t)$ have the same distribution for all $t$, with mean 0, and values for different $t$'s are uncorrelated.
Such an SDE was proposed by Langevin in 1908 as a simpler formulation of Einstein's 1905 theory of Brownian motion. \citet{hotelling} used a similar model for logistic population growth. By the middle of the 20th Century it had been found that a rigorous mathematical treatment of the theory becomes surprisingly complex. Textbooks go over extensive advanced mathematical background material before dealing with SDEs \citep{allen07,arnold74,gardiner94,henderson06,kloeden92,oksendal03}. For modelling, an intuitive understanding may suffice.

One problem with eq.~\eqref{eq:langevin} is that a completely uncorrelated $\epsilon(t)$ would be discontinuous everywhere and have infinite variance. The way around that is to work instead with its integral $W(t)$, known as a Brownian motion or Wiener process. Increments $\Delta W$ represent the accumulated white noise over a time interval $\Delta t$, and are uncorrelated for non-overlapping intervals. Because of this, the expected value of $\Delta W$ is 0 and the variance is proportional to $\lvert\Delta t\rvert$; the proportionality constant is standardized as 1. As a sum of a potentially large number of uncorrelated increments, $W(t)$ is Gaussian. The white noise in eq.~\eqref{eq:langevin} would correspond to the limit of $\Delta W / \Delta t$ as $\Delta t$ tends to 0 but, although $W(t)$ is continuous, its fine structure is exceedingly rough (it is not differentiable), and such limit does not exist. SDEs are therefore conventionally written in terms of differentials:
\begin{equation} \label{eq:sde}
    \dd X(t) = f[X(t), t] \dd t + g[X(t), t] \dd W(t) \;,
\end{equation}
also called a continuous-time diffusion model.
The capital letters emphasize the fact that $X(t)$ and $W(t)$ are random variables. In general, $f$ and $g$ can depend on time, in a \emph{time-variant} SDE, although it is more common to assume that behavior does not change over time (\emph{time-invariant} or \emph{autonomous}). The function $f$ is sometimes called the drift or trend, and $g$ the diffusion function or volatility.

Another difficulty lies with the definition of the integral of the second term on the right-hand side of eq.~\eqref{eq:sde}. There $X$ and $W$ are stochastic processes and, unlike what happens with ordinary functions, when representing the integral as the limit of a sum the location within a time interval where $X(t)$ is evaluated makes a difference. Two definitions are commonly used, that of Ito, with $X(t)$ evaluated at the start of each interval, and that of Stratanovich, based on the interval mid-point. We will use only volatilities $g$ independent of $X$, in which case Ito and Stratonovich coincide, and the rules of ordinary calculus hold. A useful fact is that, for any sufficiently well-behaved function $g$,
\begin{equation} \label{eq:delta}
  \delta = \int g(t) \dd W(t) \text{ is Gaussian with } \E[\delta] = 0 \text{ and }
    \Var[\delta] = \int g(t)^2 \dd t \;.
\end{equation}

More generally, the state of the system may be specified by several variables, and then $X(t)$ is a vector. To simplify the presentation only one-dimensional time-invariant SDEs will be discussed in detail, mentioning briefly the main changes needed for time-variant models. Most results can be extended to multiple state variables by substituting appropriate vectors and matrices (Sec.~\ref{sec:multi}).

In general, explicit forms for the probability distributions associated to eq.~\eqref{eq:sde} are not available, and parameter estimation is difficult \citep{nielsen00,bishwal08,donnet13}. Linear SDEs, however, are more tractable and can be explicitly solved. The same is true for any process that can be reduced to a linear SDE through a sufficiently smooth nonlinear transformation \citep[][Sec.~4.3]{kloeden92}. The model used here is a linear SDE
\begin{equation} \label{eq:linear}
  \dd Y = (\beta_0 + \beta_1 Y) \dd t + \sigma_p \dd W \;,
\end{equation}
where $Y$ is a one-to-one differentiable transformation $Y = \varphi(X, \vc\lambda)$ of the original variable $X$, parameterized by a one- or higher-dimensional vector $\vc\lambda$.
The constants $\beta_0, \beta_1, \sigma_p$, and the elements in $\vc\lambda$, are parameters to be estimated. There are no additional difficulties if a parameter appears both in the transformation and in the rest of the model, so that, more generally,
\begin{equation} \label{eq:tr}
  Y = \varphi(X, \vc\theta) \;,
\end{equation}
where $\vc\theta$ is a vector including all the model parameters.
In the literature, $Y$ and $\dd W$ are often written as $Y(t)$ and $\dd W(t)$, or as $Y_t$ and $\dd W_t$, respectively.

The data are observations $x_i$ of $X(t_i)$ at $n$ occasions $t_1 < t_2 < \cdots < t_n$. Optionally, the observations may be subject to measurement or sampling errors such that
\begin{equation} \label{eq:data}
  y_i = \varphi(x_i, \vc\theta) = Y(t_i) + \epsilon_i \; ;
   \quad i = 1, \ldots, n \;,
\end{equation}
with independent normal errors $\epsilon_i \sim N(0, \sigma_m^2)$.
The initial condition $Y(t_0)$ at $t_0 < t_1$ may be a known or unknown constant, or a Gaussian random variable. In general, $Y(t_0) = y_0 + \epsilon_0$, with $\epsilon_0 \sim N(0, \sigma_0^2)$ and independent of the other $\epsilon_i$ ($\sigma_0$ may be 0).

In a time-variant model, $\beta_0, \beta_1$ and $\sigma_p$  can be functions of $t$, possibly containing unknown parameters. The transformation $\varphi$ could also depend on $t$ \citep{bu16}.

The linear SDE in eq.~\eqref{eq:linear} is known as the Ornstein-Uklenbeck process in physics, or the Vasicek model in financial economics. Parameter estimation for linear SDEs, including multivariate and mixed-efects models with observation errors, has been implemented in the \textit{R} packages \textit{PSM} by \citet{klim09} and \textit{ctsem} by \citet{driver17}. Reducible SDEs have been used by \citet{garcia83}, \citet{lv07}, \citet{paige10}, \citet{filipe10}, and \citet{bu10,bu16}.

A power transformation $Y = X^\lambda$ in eq.~\eqref{eq:tr} produces a stochastic version of the Bertalanffy-Richards growth model \citep{bertalanffy49,richards59}, which is very flexible and includes other commonly used models as special cases: von Bertalanffy's animal growth equation with $\lambda = 1/3$, the Mistcherlich or monomolecular with $\lambda = 1$, and the logistic with $\lambda = -1$. The Box-Cox transformation $Y = X^{(\lambda)}$ from eq.~\eqref{eq:bc} includes in addition the Gompertz model, $\lambda = 0$ \citep[][sections 7.3 and 7.5.3]{garcia83,seber-wild}. \citet{grex} discusses a ``double Box-Cox'' transformation, with two shape parameters, that covers nearly all the sigmoid growth curves from the literature.

In some applications the emphasis lies on modelling the volatility, as the trend may not be important if the system operates near the steady state \citep{bu10,bu16}.  A volatility function $g(X, t)$ can be converted to one that does not depend on $X$ by using the Lamperti transform
\[
    Y = \int \frac{\dd X}{g(X, t)}
\]
\citep[see][for details]{lamperti-sde}. In particular, an SDE with multiplicative error $\sigma X \dd W$ can be converted to one with additive error $\sigma \dd W$ by a logarithmic transformation.

\subsection{Estimation}
\label{sec:sdeest}

From linearity, $Y(t)$ and the $y_i$ are Gaussian. The means, variances and covariances needed for the likelihood can be obtained by integrating eq.~\eqref{eq:linear}.
Assume for now that $\beta_1 \neq 0$, the changes needed if $\beta_1 = 0$ are indicated later. As in linear ordinary differential equations, let us multiply throughout by the integrating factor $\e^{-\beta_1 t}$. Using the rule for differentiation of a product,
\begin{align*}
  \dd \left[\e^{- \beta_1 t} (Y + \beta_0 / \beta_1)\right] &= \e^{- \beta_1 t}
  \dd (Y + \beta_0 / \beta_1) + \dd(\e^{- \beta_1 t})(Y + \beta_0 / \beta_1) \\
  &= \e^{- \beta_1 t} [\dd Y - (\beta_0 + \beta_1 Y) \dd t] \;,
\end{align*}
so that
\begin{equation} \label{eq:if}
  \dd \left[\e^{- \beta_1 t} (Y + \beta_0 / \beta_1)\right] =
  \sigma_p \e^{- \beta_1 t} \dd W \;.
\end{equation}
Integrating both sides between $t_0$ and $t_i$,
\begin{equation*}
  \e^{- \beta_1 t_i} [Y(t_i) + \beta_0 / \beta_1] -
  \e^{- \beta_1 t_0} [Y(t_0) + \beta_0 / \beta_1] =
  \sigma_p \int_{t_0}^{t_i} \e^{- \beta_1 t} \dd W(t) \;.
\end{equation*}
Finally, substituting $Y(t_i) = y_i - \epsilon_i$ from eq.~\eqref{eq:data}, and solving for $y_i$,
\begin{equation} \label{eq:int0i}
  y_i = - \beta_0 / \beta_1 + \e^{\beta_1 (t_i -t_0)} \left(y_0 + \beta_0 / \beta_1
   - \epsilon_0\right) + \epsilon_i + \delta_{0i} \;;\quad i = 1, \ldots, n \;,
\end{equation}
where, from \eqref{eq:delta},
\begin{equation*}
  \delta_{0i} = \sigma_p \int_{t_0}^{t_i} \e^{\beta_1 (t_i -  t)} \dd W(t)
\end{equation*}
is a Gaussian random variable.
For a time-variant model, similar results are obtained replacing $\e^{- \beta_1 t}$ with the integrating factor $\exp \left[-\int \beta(t) \dd t \right]$, and moving $\sigma_p(t)$ inside the integral \citep[][Sec.~4.4]{kloeden92}. This can be generalized to the multidimensional case (Sec.~\ref{sec:multi}).

The means, variances and covariances determining the Gaussian probability density of a sample $(y_1, \ldots, y_n)$ can be obtained from eq.~\eqref{eq:int0i} \citep{garcia83}. The methods of Sec.~\ref{sec:tgm} can then be used, noting that $x_i$ and $y_i$ here correspond to $y_i$ and $z_i$ in Sec.~\ref{sec:tgm}. A sum of squares $\sum u_i^2$, calculated from the observations $x_1, \ldots, x_n$,  can be minimized  to compute ML or MAP estimates for the parameters $\vc \lambda, \beta_0, \beta_1, \sigma_p, \sigma_m$, and possibly $y_0$ and $\sigma_0$.

The covariance matrix of the $y_i$ is dense, all the $y_i$ are correlated because the integrals in $\delta_{0i}$ involve overlapping time intervals. It is more convenient to introduce an additional transformation from $\vc y$ to $\vc z$ that produces diagonal or tri-diagonal covariance matrices, simplifying computation. The idea is to consider changes between consecutive observations, integrating eq.~\eqref{eq:if} from $t_{i-1}$ to $t_i$ instead of from $t_0$ to $t_i$. The stochastic integrals are then independent, because the time intervals do not overlap. If there is an observation error, it affects the observed changes over the adjacent intervals, so that these are correlated, but intervals further apart remain independent. The relationship between consecutive observations is obtained substituting $i-1$ for 0 in eq.~\eqref{eq:int0i}:
\begin{equation} \label{eq:inti}
  y_i = - \beta_0 / \beta_1 + \e^{\beta_1 \Delta_i} \left(y_{i-1} + \beta_0 / \beta_1
   - \epsilon_{i-1}\right) + \epsilon_i + \delta_i \;;\quad i = 1, \ldots, n \;,
\end{equation}
where $\Delta_i \equiv t_i - t_{i-1}$ and
\begin{equation*}
  \delta_i = \sigma_p \int_{t_{i-1}}^{t_i} \e^{\beta_1 (t_i -  t)} \dd W(t) \;.
\end{equation*}
From \eqref{eq:delta}, the $\delta_i$ are independent Gaussian random variables with $\E[\delta_i] = 0$ and
\begin{equation*}
  \Var[\delta_i] = \sigma_p^2 \int_{t_{i-1}}^{t_i} \e^{2 \beta_1 (t_i - t)} \dd t
  =  \frac{\sigma_p^2}{2 \beta_1} \left(\e^{2 \beta_1 \Delta_i} - 1\right) \;.
\end{equation*}

Define
\begin{equation} \label{eq:zsde}
  z_i = y_i + \beta_0 / \beta_1 - \e^{\beta_1 \Delta_i} \left(y_{i-1} + \beta_0 / \beta_1\right)
  \;; \quad i = 1, \ldots, n \;.
\end{equation}
The Jacobian of the transformation $\vc y \to \vc z$ is 1, so that the $z_i$ can be directly substituted for the $y_i$ in the likelihood. From eq.~\eqref{eq:inti},
\begin{equation*}
  z_i = - \e^{\beta_1 \Delta_i} \epsilon_{i-1} + \epsilon_i + \delta_i \;.
\end{equation*}
Therefore,
\begin{subequations}\label{eq:zcorrs} \begin{align}
  E[z_i] &= 0 \;; \quad i = 1, \ldots, n \\
  \Var[z_1] &=
    \e^{2 \beta_1 \Delta_1} \sigma_0^2 + \sigma_m^2 +
     \frac{\sigma_p^2}{2 \beta_1} \left(\e^{2 \beta_1 \Delta_1} - 1\right) \\
  \Var[z_i] &=
    \e^{2 \beta_1 \Delta_i} \sigma_m^2 + \sigma_m^2 +
     \frac{\sigma_p^2}{2 \beta_1} \left(\e^{2 \beta_1 \Delta_i} - 1\right)
     \;;\quad i = 2, \ldots, n \\
  \Cov[z_i, z_{i-1}] &= \Cov[z_{i-1}, z_i] = - \e^{\beta_1 \Delta_i} \sigma_m^2
   \;; \quad i = 2, \ldots, n \\
   \Cov[z_i, z_j] &= 0 \quad\text{if}\quad \lvert i - j\rvert > 1
\end{align}\end{subequations}
\citep[][Sec.~7.5.3]{garcia83,seber-wild}.
It can be seen that the $z_i$ are the conditional residuals $y_i - E[y_i | y_{i-1}]$ and, as expected, they are uncorrelated if $\sigma_m = 0$, or are negatively correlated with the adjacent $z_{i-1}$ and $z_{i+1}$ otherwise.

If $\beta_1 = 0$, the relevant equations are:
\begin{gather*}
  \dd Y = \beta_0 \dd t + \sigma_p \dd W \\
  \dd (Y - \beta_0 t) = \sigma_p \dd W \\
  y_i - \epsilon_i - \beta_0 t_i - y_{i-1} + \epsilon_{i-1} + \beta_0 t_{i-1} = \delta_i\\
  z_i = y_i - y_{i-1} - \beta_0 \Delta_i = \epsilon_i - \epsilon_{i-1} + \delta_i \\
  \Var[z_1] = \sigma_m^2 + \sigma_0^2 + \sigma_p^2 \Delta_1 \\
  \Var[z_i] = 2 \sigma_m^2 + \sigma_p^2 \Delta_i \;;\quad i = 2, \ldots, n \\
  \Cov[z_i, z_{i-1}] = - \sigma_m^2 \;;\quad i = 2, \ldots, n \\
\end{gather*}

\subsection{Computational details}
\label{sec:comp}

To apply the methods of Section \ref{sec:tgm}, it is necessary to express the covariance matrix as the product of an unknown parameter $\sigma^2$ and a matrix $\mat C$ , see eq.~\eqref{eq:cov}. One could take $\sigma^2 = \sigma_m^2$ or $\sigma^2 = \sigma_p^2$, but this would fail if $\sigma^2 = 0$. A better option is
\begin{equation} \label{eq:sigma2}
  \sigma^2 = \sigma_m^2 + \sigma_p^2 \;.
\end{equation}
Then, with the re-parametrization
\begin{equation} \label{eq:eta}
  \sigma_m^2 = \sigma^2 \eta \;,\quad \sigma_p^2 = \sigma^2(1 - \eta) \;,\quad \sigma_0^2 = \sigma^2 \eta_0 \;,
\end{equation}
if $\beta_1 \neq 0$ the non-zero elements of $\mat C$ are
\begin{subequations}\label{eq:C} \begin{align}
  &c_{11} = \e^{2\beta_1 \Delta_i} \eta_0 + \eta +
    (1 - \eta) \frac{\e^{2 \beta_1 \Delta_1} - 1}{2 \beta_1} \\
  &c_{ii} = \e^{2\beta_1 \Delta_i} \eta + \eta +
    (1 - \eta) \frac{\e^{2 \beta_1 \Delta_i} - 1}{2 \beta_1}
     \;; \quad i = 2, \ldots, n \\
  &c_{i,i-1} = c_{i-1,i} = - \e^{\beta_1 \Delta_i} \eta \;; \quad i = 2, \ldots, n \;.
\end{align}\end{subequations}
If $\beta_1 = 0$,
\begin{subequations}\label{eq:C0} \begin{align}
  &c_{11} = \eta + \eta_0 + (1 - \eta) \Delta_1 \\
  &c_{ii} = 2 \eta + (1 - \eta) \Delta_i
     \;, \quad i = 2, \ldots, n \\
  &c_{i,i-1} = c_{i-1,i} = - \eta \;, \quad i = 2, \ldots, n \;.
\end{align}\end{subequations}
The ML method is invariant under re-parametrization \citep[][Sec.~5.1]{zacks}. In the optimization one needs to ensure that $0 \leq \eta \leq 1$.

It remains to find $\mat L$ such that $\mat C = \mat L \mat L'$, and then $\ln J = \sum \ln \lvert\pdr{\varphi_i}{x_i}\rvert - \sum \ln l_{ii}$, and $\vc u = \mat L^{-1} \vc z / J^{1/n}$ (see equations \eqref{eq:cholesky}--\eqref{eq:ucorr}). In \textit{R} this could be done using the \textit{base} functions \texttt{chol} and \texttt{baksolve}, or the corresponding sparse-matrix functions from package \textit{Matrix}. We show instead a generic method that makes efficient use of the structure of $\mat C$.

Because $\mat C$ is tri-diagonal, $\mat L$ has non-zero elements only on the diagonal and sub-diagonal, and $\mat C = \mat L \mat L'$ gives
\begin{align*} 
  &c_{11} = l_{11}^2 \\
  &c_{ii} = l_{i, i-1}^2 + l_{ii}^2 \;;\quad i=2, \ldots, n \\
  &c_{i, i-1} = l_{i, i-1} l_{i-1, i-1}  \;;\quad i=2, \ldots, n \;.
\end{align*}
Therefore, $\mat L$ can be calculated sequentially from
\begin{align*}
  &l_{11} = \sqrt{c_{11}} \\
  &\text{for } i=2, \ldots, n \\
  &\quad l_{i, i-1} = c_{i, i-1} / l_{i-1, i-1} \\
  &\quad l_{ii} = \sqrt{c_{ii} - l_{i, i-1}^2} \;.
\end{align*}
Similarly, from $\mat L^{-1} \vc z = \vc v$ and $\vc z = \mat L \vc v$,
\begin{align*}
  &z_1 = l_{11} v_1 \\
  &z_i = l_{i, i-1} v_{i-1} + l_{ii} v_{i}  \;;\quad i=2, \ldots, n \;,
\end{align*}
which can be solved as
\begin{align*}
  &v_1 = z_1 / l_{11} \\
  &\text{for } i=2, \ldots, n \\
  &\quad v_i = (z_i - l_{i, i-1} v_{i-1}) / l_{ii} \;.
\end{align*}
The two loops can be combined into one. The \textit{R} function \texttt{logdet.and.v} in \ref{app:sde} uses this to compute $\ln \lvert\mat L\rvert$ and $\vc v = \mat L^{-1} \vc z$.

Now the vector $\vc u$ can be computed with function \texttt{uvector} from \ref{app:sde}, given user-supplied functions for the transformation $x \to y$ and its derivative. The ML parameter estimates are found by minimizing the sum of squares $\sum u_i^2$. The same function \texttt{uvector} may be used to calculate estimates for $\sigma_p, \sigma_m$ and $\sigma_0$ and the maximized log-likelihood value.

If it is known that there are no observation errors ($\sigma_m = 0$), the $z_i$ are uncorrelated and the calculations can be greately simplified.

MAP estimates can be produced by adjoining a prior distribution as shown in Section \ref{sec:map}. A prior on $\eta$ such as a Beta distribution might be useful to force it away from the extremes $\eta = 0$ and $\eta = 1$.

\subsubsection{Example 2}
\label{sec:ex2}

\citet[][Ch.~6, Example 1]{pinheiro00} present data consisting of 6 height--age observations in each of 14 loblolly pine trees. This \texttt{Loblolly} dataset is included in the standard \textit{R} data collection \citep{R}. In this example we use only the first tree, tree \#301.
The height $h_i = H(t_i)$ is in feet, and the age or time $t$ is in years.

A suitable model is
\begin{equation*}
  \der{H^c}{t} = b(a^c - H^c) \;.
\end{equation*}
Integration gives the Bertalanffy-Richards growth curve
\[  H = a\left\{1 - \left[1 - (h_0 / a)^c\right] \exp\left[-b (t - t_0)\right]\right\}^{1 / c} \]
that, as explained above, for small $c$ approximates the Gompertz curve used by \citet{pinheiro00}, and for $c = -1$ coincides with the logistic from the original study. The parameter $a$ is the curve upper asymptote, $b$ is a time scale factor, and $c$ determines the location of the inflection point. Assume $H(0) = 0$.

Now introduce additive process noise,
\begin{equation} \label{eq:ex2sde}
  \dd H^c = b(a^c - H^c) \dd t + \sigma_p \dd W \;,
\end{equation}
and measurement error,
\begin{equation*}
  h_i^c = H(t_i)^c + \epsilon_i \;,\quad i = 1, \ldots, 6 \;,
\end{equation*}
with the $\epsilon_i$ independent normally distributed with mean 0 and variance $\sigma_m^2$.

The calculations are shown in detail in \ref{app:ex2}.
To facilitate convergence, initial estimates were first obtained fixing $\eta = 0.5$ ($\eta$ is the relative measurement variance from eqns.~\eqref{eq:sigma2}--\eqref{eq:eta}), minimizing the sum of squares of $\vc u$ using \texttt{uvector} and \texttt{nls}. Then, the sum of squares was minimized with $\eta$ constrained to be between 0 and 1. The result was $a = 72.55$, $b = 0.0967$, $c = 0.5024$, $\eta = 1.000$, $\sigma_p = 0$, $\sigma_m = 0.04865$, with a maximized log-likelihood value of -3.988.

The result is on the boundary $\eta = 1$, suggesting that most of the variability arises from measurement errors, with an indication of ill-conditioning and little effect of $\eta$ on the other parameter estimates. It has been found that with small datasets like this one it is often difficult to discriminate between environmental and observational sources of error \citep{garcia83}.

As an alternative to the additive noise, consider multiplicative process noise:
\begin{equation*}
    \dd H^c = b(a^c - H^c) (\dd t + \sigma_p \dd W)
     = b(a^c - H^c) \dd t + b \sigma_p (a^c - H^c) \dd W \;.
\end{equation*}
The Lamperti transform suggests
\begin{equation*}
  Y = \varphi(H, \vc\theta) = \ln\lvert a^c - H^c\rvert \;.
\end{equation*}
Under the Stratanovich interpretation of the SDE the ordinary calculus rules hold, and
\begin{equation*}
  \dd Y = - b \dd t + b \sigma_p \dd W
\end{equation*}
(ignoring an inconsequential change of sign in $W$). The measurement errors are assumed to be independent, of the form
\begin{equation*}
  y_i = Y(t_i) + \epsilon_i \;,\quad \epsilon_i \sim N(0, \sigma_m^2) \;.
\end{equation*}

There seem to be good reasons to prefer Stratanovich to Ito in this instance \citep[][Section 6.1]{kloeden92}. For the more common Ito interpretation, it is found from Ito's formula \citep[][eq.~(4.3.14)]{gardiner94} that the equations differ only by a change of parameters: $\dd Y = - (b + \tfrac{1}{2} b^2 \sigma_p^2) \dd t + b \sigma_p \dd W$. We keep the previous formulation.

The ML estimation for the multiplicative noise version gave
$a = 77.11$, $b = 0.08405$, $c = 0.54946$, $\eta = 1.000$, $\sigma_p = 0$, $\sigma_m = 0.01577$, with a maximized log-likelihood value of -3.568 (details in \ref{app:ex2}).
The log-likelihood is not significantly different from the one for the additive model, and on a graph the two curves are quite close (Figure \ref{fig:figure}).

\fig{figure}{Data and fitted models for Example 2. Continuous curve: additive process noise. Dashed curve: multiplicative process noise.}

Multiplicative models for most classical growth curves can be derived using Box-Cox transformations, as special cases of either $Y = \ln[-(H / a)^{(c_1)}]^{(c_2)}$ or $Y = \ln[-(1 - H/ a)^{(c_1)}]^{(c_2)}$ \citep{grex}. The Bertalanffy-Richards family corresponds to $c_2 = 0$

\subsection{Hierarchical models}
\label{sec:hier}

Often the data consists of several measurements on each of a number of individuals or sampling or experimental units (\emph{units}, for short). For instance, the height measurements at various ages on each of 14 trees in the \texttt{Loblolly} dataset of Example 2. This is known as panel, repeated measures, or longitudinal data, and gives rise to hierarchical or multilevel models; the case of two hierarchical levels is discussed here. Some parameters may vary among units (\emph{local}), while others are common to all units (\emph{global}), possibly after a re-parametrization of the original model. Local parameters may be treated simply as fixed unknown values, called nuisance parameters if they are not the main object of interest \citep{garcia83}. More commonly, in mixed-effects models, the locals are viewed as ``random'', possibly reflecting sampling from some hypothetical super-population \citep{pinheiro00,snijders03}. Mixed-effects SDE examples include \citet{donnet08}, \citet{klim09}, \citet{picchini10}, \citet{driver17}, and \citet{whitaker17}.

In the derivations of Section \ref{sec:tgm}, let us substitute a double subscript $ij$ for $i$, indicating observation $i = 1, \ldots, n_j$ in unit $j = 1, \ldots, m$. Then, the observations become $\vc y = (y_{11}, \ldots, y_{n_m, m})$, or $\vc y = (\vc y_1, \ldots, \vc y_m)$, where $\vc y_j$ is the vector of $n_j$ observations $y_{ij}$ in unit $j$. With these notational changes, it is seen that the arguments and results of Section \ref{sec:tgm} are still valid, provided that $\sigma^2$ is common to all units (global). However, some computations can be simplified:

It is commonly assumed that the units are statistically independent. Assume also that the transformations involve only observations in the same unit, that is, $z_{ij} = \varphi_{ij}(\vc y_j, \vc\theta_j)$. The parameters $\vc\theta_j$ can vary from unit to unit. Then, assuming that $\sigma^2$ is common to all units, eq.~\eqref{eq:ucorr} becomes
\begin{equation} \label{eq:uij}
  \vc u = \frac{\left(\mat L_1^{-1} \vc\epsilon_1(\vc y_1, \vc\theta_1), \ldots,
    \mat L_m^{-1} \vc\epsilon_m(\vc y_m, \vc\theta_m)\right)}{J^{1/n}}
\end{equation}
where $J$ in eq.~\eqref{eq:Jcorr} can be obtained from
\begin{equation*}
  J = \abs\left(\left\lvert\pder{\mat L^{-1} \vc \varphi}{\vc y}\right\rvert\right)
    = \abs\left(\prod_{j=1}^m \left\lvert\pder{\mat L_j^{-1} \vc\varphi_j}{\vc y_j}\right\rvert\right) \;.
\end{equation*}

Applying this to the SDEs, one can compute $\vc v_j = \mat L_j^{-1} \vc z_j$ and $\ln J_j$for each unit as before (Section \ref{sec:comp}), and obtain $\vc u = \vc v / J^{1/n}$ with $\vc v = (\vc v_1,\ldots, \vc v_m)$ and $\ln J =  \sum \ln J_j$.

To ensure a global $\sigma^2$ under realistic circumstances, assume that $\sigma_m$ can be written as the product of a possibly local multiplier $\mu_m(\vc\theta_j) = \mu_{mj}$ and a global parameter $\sigma_M$, and similarly for $\sigma_0$ and $\sigma_p$:
\begin{equation} \label{eq:mu}
  \sigma_m = \mu_{mj} \sigma_M \;,\quad  \sigma_0 = \mu_{0j} \sigma_Z \;,\quad
    \sigma_p = \mu_{pj} \sigma_P
   \;.
\end{equation}
Then, define
\begin{equation} \label{eq:sigmaj}
  \sigma^2 = \sigma_M^2 + \sigma_P^2 \;,\quad \eta = \sigma_M^2 / \sigma^2
   \;,\quad \eta_0 = \sigma_Z^2 / \sigma^2 \;,
\end{equation}
so that in eqs.~\eqref{eq:zcorrs} we substitute
\begin{equation*}
  \sigma_m^2 = \sigma^2 \mu_{mj}^2 \eta \;,\quad \sigma_0^2 =
   \sigma^2 \mu_{0j}^2 \eta_0 \;,\quad \sigma_p^2 = \sigma^2 \mu_{pj}^2 (1 - \eta)
\end{equation*}
to obtain the analogous of eq.~\eqref{eq:C} for unit $j$
\begin{subequations}\label{eq:Cj} \begin{align}
  &c_{11j} = \e^{2\beta_{1j} \Delta_{1j}} \mu_{0j}^2 \eta_0 + \mu_{mj}^2 \eta +
    (1 - \eta) \mu_{pj}^2 \frac{\e^{2 \beta_{1j} \Delta_{1j}} - 1}{2 \beta_{1j}} \\
  &c_{iij} = \e^{2\beta_{1j} \Delta_{ij}} \mu_{mj}^2 \eta + \mu_{mj}^2 \eta +
    (1 - \eta) \mu_{pj}^2 \frac{\e^{2 \beta_{1j} \Delta_{ij}} - 1}{2 \beta_{1j}}
     \;, \quad i = 2, \ldots, n \\
  &c_{i,i-1,j} = c_{i-1,i,j} = - \e^{\beta_{1j} \Delta_{ij}} \mu_{mj}^2 \eta \;,
  \quad i = 2, \ldots, n
\end{align}\end{subequations}
with $\Delta_{ij} = t_{ij} - t_{i-1, j}$.

In the case $\beta_{1j} = 0$, eqs.~\eqref{eq:C0} become
\begin{subequations}\label{eq:C0j} \begin{align}
  &c_{11j} = \mu_{mj}^2 \eta + \mu_{0j}^2 \eta_0 + (1 - \eta) \mu_{pj}^2 \Delta_{1j} \\
  &c_{iij} = 2 \mu_{mj}^2 \eta + (1 - \eta) \mu_{pj}^2 \Delta_{ij}
     \;, \quad i = 2, \ldots, n \\
  &c_{i,i-1,j} = c_{i-1,i,j} = - \mu_{mj}^2 \eta \;, \quad i = 2, \ldots, n \;.
\end{align}\end{subequations}

An extension of the function \texttt{uvector} to handle hierarchical models is listed in \ref{app:hier}.

\subsubsection{Fixed local parameters}
\label{sec:fixed}

Viewing both  global and local parameters as unknown constants, estimation is relatively simple in \textit{R}, where function \texttt{nls} allows vector-valued parameters. This feature seems to be documented only in the last example of the \texttt{nls} help page, which appears also on page 219 of \citet{venables02}.

The method performs reasonably well for at least several hundred parameters. Presumably, \texttt{nls} exploits special structure through matrix partitioning strategies, as in \citet{griewank82}, \citet{garcia83}, or \citet{soo92}. It was verified that the approach gives exactly the same results as the method of \citet{garcia83}, which is based on direct optimization of the log-likelihood (software and documentation available at \url{http://forestgrowth.unbc.ca/sde}). Computation times are much longer than those of Garc\'{\i}a's compiled Fortran code, but that is not an impediment with modern computing hardware.

\paragraph{Example 3.}
\label{sec:ex3}
Consider fitting Richards SDE models simultaneously to all the 14 trees of the \texttt{Loblolly} dataset of Example 2. The parametrization can be important here, so re-write eq.~\eqref{eq:ex2sde} using the Box-Cox transformation of eq.~\eqref{eq:bc}, ensuring that $a$ and $b$ are proper scale parameters:
\begin{align*}
  Y &= \left(H / a\right)^{(c)} \\
  \dd Y &= - Y \dd (b t) + \sigma_P \dd W(b t) = - b Y \dd t +
    \sqrt{\lvert b\rvert}\sigma_P \dd W(t) \;.
\end{align*}
Assume that the measurement error is negligible compared to the process noise.

See \ref{app:ex3} for details of the calculations.

First, take $a$ as local, i.~e., the asymptotes $a_j$ vary from tree to tree. The ML estimates are found to be:
\[  
    a_1 = 68.37, \dots, a_{14} = 78.84, b = 0.09472, c = 0.4918, \sigma_P = 0.03359 \;,
\]
with a log-likelihood of -88.4. For comparing models with different numbers of parameters one can use Akaike's Information Criterion AIC = 210.8, or Schwarz's Bayesian Information Criterion BIC = 252.1. These indices penalize twice the negative log-likelihood by subtracting a quantity that increases with the number of estimated parameters.

Trying with $a$ global and $b$ local gives
\[
    a = 73.08, b_1 = 0.08912, \dots, b_{14} = 0.1031, c = 0.4916, \sigma_P = 0.03231 \;,
\]
with log-likelihood = -85.2, AIC = 204.3, BIC = 245.6. 
The log-likelihood (or equivalently, the AIC or BIC) indicates that this model fits the data slightly better than the one with $a$ local.

Taking both $a$ and $b$ as locals gave worse AIC and BIC values than those for the one-local versions. Other structures could be defined by re-parametrization, substituting functions of other global and local parameters for $a$, $b$ and $c$ \citep{garcia83}.

\subsubsection{Random local parameters}
\label{sec:random}

A more popular alternative is to view local parameters as, in some sense, random. This is known as a \emph{nonlinrar mixed-effects model} \citep{pinheiro00}. The easiest situation to understand and justify is where the units are thought to be a random sample from a population where the parameters follow certain frequency function. We adopt the usual assumption of Gaussian local parameter distributions. In mixed-effects modelling terminology, the units are sometimes called \emph{groups} (of observations), the global parameters and the means of the locals are \emph{fixed effects}, and the deviations of the local parameters from their means are \emph{random effects}.

With this approach the number of unknown parameters is much reduced, instead of one local parameter value for each unit it is only necessary to estimate a mean and a variance (and possibly covariances among locals). On the other hand, there are additional model assumptions, and the estimation algorithms are more complicated. The model may be questionable when, as is often the case, rather than being a simple random sample the units are chosen to have a more efficient coverage of a range of conditions.

A mixed-effects SDE model can be estimated by maximum likelihood using standard packages and the $u_{ij}$ computations above.
In \textit{R} one can use function \texttt{nlme} from the package of the same name, or \texttt{nlmer} from package \textit{lme4}. These functions do not support bounds on parameter values, so that in SDEs with observation errors it would be necessary to do multiple runs with given values of $\eta$, or alternatively, to embed the mixed-effects estimation within a one-dimensional optimization over $\eta$ (e.~g., using \texttt{optimize}).

\paragraph{Example 4.}
\label{sec:ex4}
The $b$-local version from Example 3 was fitted as a mixed-effects model using the \texttt{nlme} \textit{R} package (\ref{app:ex4}). It was necessary to relax the default tolerance in order to achieve convergence. The ML estimates for the global parameters were $a = 73.43$ and $c = 0.4938$. In this formulation $b$ is a random variable, so that it does not make sense to speak of ``estimates'', but the software provides ``predictions'' $b_1 = 0.08973, \dots, b_{14} = 0.09964$, and a mean $0.09381$.

The log-likelihood was $-101.78$, which is not comparable to that in Example 3 because there are less free parameters in the mixed-effects approach. The AIC and BIC should be more informative. 
The AIC = 213.6 compared to the AIC = 204.3 in Example 3 suggests that the fixed-parameters model is better than the mixed-effects version. However, the opposite is true for the BIC that penalizes the number of parameters more heavily, $225.7$ \emph{vs.} $245.6$.

\subsection{Multivariate SDEs}
\label{sec:multi}

Some applications use a system of SDEs or, equivalently, a vector-valued SDE \citep[e.~g.,][]{gdns,klim09,picchini11,donnet13,driver17}.
Analogously to the single-variable case, we assume that there is a random $p$-dimensional  vector $\vc Y$ that follows a linear SDE
\begin{equation} \label{eq:veclin}
  \dd \vc Y = (\mat A \vc Y + \vc b) \dd t + \mat S \dd \vc W \;.
\end{equation}
Here $\mat A$ and $\mat S$ are $p \times p$ matrices, and $\vc b$ is a $p$-vector, any or all of them dependent on parameters in $\vc\theta$, and $\vc W$ is a vector of $p$ independent Wiener processes.

The observations $\vc x_i$ at time $t_i$ are related to $\vc Y$ by a one-to-one transformation
\begin{equation} \label{eq:vecphi}
  \vc y_i = \vc\varphi(\vc x_i, \vc\theta) \;,
\end{equation}
and may be subject to observation error according to
\begin{equation} \label{eq:vecobs}
  \vc y_i = \vc Y(t_i) + \vc\epsilon_i \;,\quad \vc\epsilon_i \sim N(\vc 0, V) \;.
\end{equation}
The errors $\vc\epsilon_i$ are uncorrelated across observation times, and independent of the process noise $\vc W$. The initial conditions are $Y(t_0) = \vc y_0 + \vc\epsilon_0$, with $\vc\epsilon_0 \sim N(\vc 0, \mat V_0)$. All this can apply to units with different parameter values $\vc\theta_j$ and expected initial conditions $\vc y_{0j}$; non-essential indices will be omitted.

A useful example of transformation $\vc y = \vc\varphi(\vc x, \vc\theta)$ is
\begin{equation*}
  y_k = x_1^{c_{k1}} x_2^{c_{k2}} \cdots x_p^{c_{kp}} \;,\quad k = 1, \ldots, p \;,
\end{equation*}
which can be conveniently denoted as $\vc y = \vc x^{\mat C}$, defining $\vc x^{\mat C} \equiv \exp(\mat C \ln \vc x)$. This gives a flexible multivariate analog of the Bertalanffy-Richards SDE \citep{gdns}.

In the time-invariant case the matrix exponential $\e^{-\mat A t}$ is an integrating factor \citep[see][or the \textit{expm} package in \textit{R} for definitions and computation of the matrix exponential]{moler03}. For non-singular $\mat A$,
\begin{equation*} 
  \dd \left[\e^{-\mat A t} \left(\vc Y + \mat A^{-1}\vc b\right)\right] = \e^{-\mat A t} \mat S \dd\vc W
\end{equation*}
(c.~f.\ eq.~\eqref{eq:if}). Then,
\begin{equation*}
  \e^{-\mat A t_i} \left[\vc Y(t_i) + \mat A^{-1}\vc b\right] - \e^{-\mat A t_{i-1}} \left[\vc Y(t_{i-1}) + \mat A^{-1}\vc b\right] = \int_{t_{i-1}}^{t_i} \e^{-\mat A t} \mat S \dd\vc W(t) \;.
\end{equation*}
Multiplying by $\e^{\mat A t_i}$ and substituting $\vc Y(t_i) = \vc y_i - \vc\epsilon_i$,
\begin{equation} \label{eq:vecz}
  \vc z_i \equiv \vc y_i + \mat A^{-1}\vc b - \e^{\mat A \Delta_i} \left(\vc y_{i-1} + \mat A^{-1}\vc b\right) = \vc\epsilon_i - \e^{\mat A \Delta_i} \vc\epsilon_{i-1} + \vc\delta_i
\end{equation}
with $t_i - t_{i-1} \equiv \Delta_i$ and
\begin{gather*}
  \vc\delta_i = \int_{t_{i-1}}^{t_i} \e^{\mat A (t_i - t)} \mat S \dd\vc W(t) \;, \\
  \E[\vc\delta_i] = 0 \;,
\end{gather*}
\begin{equation} \label{eq:Vdelta}
    \Var[\vc\delta_i] = \int_{t_{i-1}}^{t_i} \e^{\mat A (t_i - t)} \mat S \mat S'
       {\e^{\mat A (t_i - t)}}' \dd t
       = \int_0^{\Delta_i} \e^{\mat A t} \mat S \mat S' {\e^{\mat A t}}' \dd t
\end{equation}
similarly to eq.~\eqref{eq:delta} \citep[][Sec.~4.4.6, and Sec.~4.4.9 for the time-variant case]{gardiner94}.

Therefore, eq.~\eqref{eq:vecz} defines a transformation $\{\vc y_i\} \to \{\vc z_i\}$ with unit Jacobian, such that the $\vc z_i$ (or $\vc z_{ij}$) are Gaussian with
\begin{subequations}\begin{align}
  \E[\vc z_i] &= \vc 0 \;;\quad i = 1, \ldots, n \\
  \Var[\vc z_1] &= \mat V + \e^{\mat A \Delta_0} \mat V_0 {\e^{\mat A \Delta_0}}' + \Var[\vc\delta_1] \\
  \Var[\vc z_i] &= \mat V + \e^{\mat A \Delta_i} \mat V {\e^{\mat A \Delta_i}}' + \Var[\vc\delta_i]
    \;;\quad i = 2, \ldots, n \\
  \Cov[\vc z_i, \vc z_{i-1}] &= - \e^{\mat A \Delta_i} \mat V \;;\quad i = 2, \ldots, n \;.
\end{align}\end{subequations}

\citet{gdns} obtained explicit likelihoods for systems where oscillations are not admissible, for which $\mat A = \mat P^{-1} \mat\Lambda \mat P$, $\e^{\mat A \Delta_i} = \mat P^{-1} \e^{\mat\Lambda \Delta_i} \mat P$, and $\mat\Lambda$ is a diagonal matrix of real eigenvalues. That includes dynamical systems with monotonic state variables, for instance, forest plantations where diameter and height increases and number of trees decreases over time. There it is convenient to take the elements of $\mat \Lambda$ and $\mat P$ instead of those of $\mat A$ as parameters to be estimated.

For general $\mat A$, it is found through integration by parts of eq.~\eqref{eq:Vdelta} that:
\begin{proposition}
 $\Var[\vc\delta_i]$ satisfies
\begin{equation}\label{eq:lyap}
  \mat A \Var[\vc\delta_i] + \Var[\vc\delta_i] \mat A' =
   \e^{\mat A \Delta_i} \mat S \mat S' {\e^{\mat A \Delta_i}}' - \mat S \mat S'\;.
\end{equation}
\end{proposition}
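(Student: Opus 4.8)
The plan is to start directly from the closed form for the covariance already given in eq.~\eqref{eq:Vdelta}, namely $\Var[\vc\delta_i] = \int_0^{\Delta_i} \e^{\mat A t}\mat S\mat S'{\e^{\mat A t}}'\dd t$, and to convert the left-hand side of eq.~\eqref{eq:lyap} into the stated boundary difference by a single matrix integration by parts. I would not attempt to evaluate the integral explicitly: for general $\mat A$ that would require a diagonalisation or Jordan form, which is exactly what the Lyapunov equation lets us sidestep. The whole point is instead to relate the integral to the values of its integrand at the endpoints.

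Concretely, I would left-multiply $\Var[\vc\delta_i]$ by $\mat A$ and pull $\mat A$ under the integral sign, using the fact that $\mat A$ commutes with its own exponential so that $\mat A\e^{\mat A t} = \der{}{t}\e^{\mat A t}$. Integrating by parts then produces the boundary term $\bigl[\e^{\mat A t}\mat S\mat S'{\e^{\mat A t}}'\bigr]_0^{\Delta_i} = \e^{\mat A \Delta_i}\mat S\mat S'{\e^{\mat A \Delta_i}}' - \mat S\mat S'$, while the leftover integral is $\int_0^{\Delta_i}\e^{\mat A t}\mat S\mat S'{\e^{\mat A t}}'\mat A'\dd t$, since $\der{}{t}{\e^{\mat A t}}'={\e^{\mat A t}}'\mat A'$. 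Recognising this leftover integral as $\Var[\vc\delta_i]\,\mat A'$ gives $\mat A\Var[\vc\delta_i] = \e^{\mat A \Delta_i}\mat S\mat S'{\e^{\mat A \Delta_i}}' - \mat S\mat S' - \Var[\vc\delta_i]\,\mat A'$, which rearranges at once to eq.~\eqref{eq:lyap}. Equivalently, I could skip the by-parts step by differentiating the integrand $\mat M(t)=\e^{\mat A t}\mat S\mat S'{\e^{\mat A t}}'$ to obtain $\der{\mat M}{t}=\mat A\mat M+\mat M\mat A'$, integrating over $[0,\Delta_i]$, and pulling the constant matrices $\mat A$ and $\mat A'$ outside the integral by linearity.

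The computation is short, so the only real care needed is with matrix non-commutativity: every factor must keep its position, and one must resist combining $\mat A$ with the transposed exponential on the wrong side. The two facts that make everything go through are that $\mat A$ commutes with $\e^{\mat A t}$ (but, in general, with nothing else) and that transposition sends $\e^{\mat A t}$ to ${\e^{\mat A t}}'=\e^{\mat A' t}$ with derivative ${\e^{\mat A t}}'\mat A'$. No invertibility of $\mat A$ is used anywhere, which is precisely what makes this identity the right tool for the general, possibly oscillatory or singular, case in which the explicit diagonalisation of \citet{gdns} is unavailable.
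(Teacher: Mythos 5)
Your proposal is correct and takes essentially the same route as the paper: the paper's proof applies the product rule to $\e^{\mat A t}\mat S\mat S'{\e^{\mat A t}}'$ (using that $\mat A$ commutes with $\e^{\mat A t}$) and integrates over $[0,\Delta_i]$, which is precisely the ``equivalent'' variant you mention at the end and is the same integration by parts your primary argument performs, just with the derivative left on the integrand rather than moved across. Both versions share the key facts (commutation, ${\e^{\mat A t}}'\mat A'$ as the derivative of the transposed factor, no invertibility of $\mat A$ required), so there is nothing substantive to add.
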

\begin{proof} Making use of the fact that $\mat A$ and $\e^{\mat A t}$  commute,
\begin{align*}
    \dd\, &\e^{\mat A t} \mat S {\mat S}' {\e^{\mat A t}}'
    = \dd\, \left(\e^{\mat A t} \mat S \right) \left(\e^{\mat A t} \mat S \right)' \\
    &= \left(\mat A \e^{\mat A t} \mat S \dd t \right) \left(\e^{\mat A t} \mat S \right)'
    + \left(\e^{\mat A t} \mat S \right) \left(\mat A \e^{\mat A t} \mat S \dd t \right)' \\
    &= \mat A \left(\e^{\mat A t} \mat S {\mat S}' {\e^{\mat A t}}' \dd t \right)
    + \left(\e^{\mat A t} \mat S {\mat S}' {\e^{\mat A t}}' \dd t \right) {\mat A}' \;.
\end{align*}
Integrating between 0 and $\Delta_i$ and using eq.~\eqref{eq:Vdelta},
\[
   \e^{\mat A \Delta_i} \mat S \mat S' {\e^{\mat A \Delta_i}}' - \mat S \mat S'
   = \mat A \Var[\vc\delta_i] + \Var[\vc\delta_i] \mat A' \;.
\]
\end{proof}
Eq.~\eqref{eq:lyap} is known as a \emph{continuous-time Lyapunov equation}. A recent review of solution methods is contained in \citet[][Sec.~5]{simoncini16}.

The likelihood for the linear SDE can also be derived less directly by writing it as a product of the conditional densities of $\vc y_i$ given the previous $\vc y_0, \ldots, \vc y_{i-1}$. The conditional expectations and variance-covariance matrices  can be computed recursively using the Kalman filter \citep[e.~g.,][Sec.~4.2.1]{donnet13}.

Regardless of how the likelihood is produced, the Furnival-Box-Cox device following the methods of Sec.~\ref{sec:sdes} can then be used to reduce the likelihood maximization to minimizing a sum of squares. Multidimensionality does not introduce additional conceptual issues, although the details and notation can be complicated. MAP estimation as in Section \ref{sec:map} is also feasible.

Taking advantage of the ML re-parametrization invariance, it is preferable to estimate Cholesky factors instead of optimizing directly over variance-covariance matrices like $\mat V$ or $\mat V_0$ \citep[e.~g.,][]{gdns}. This ensures that the matrix will be positive-definite, as it should. The case of $\mat S$ in eq.~\eqref{eq:veclin} is more delicate, because the likelihood function depends only on the product $\mat S \mat S'$, or on the Cholesky factor of $\mat S \mat S'$. This means that a full square matrix $\mat S$ in unidentifiable, that is, different $\mat S$'s produce the same observation distribution. Therefore, a triangular matrix should replace $\mat S$ in the optimization. \citet{driver17} directly assume that $\mat S$ in eq.~\eqref{eq:veclin} is triangular; it is not always clear how the issue is handled in other software.

\section{Conclusions}
\label{sec:concl}

\citet{furnival61} and \citet{box-cox} showed how maximum likelihood estimation involving nonlinear transformations of Gaussian dependent variables can be achieved by minimizing a sum of squares. The technique has many potential applications, the focus here is on stochastic differential equation modelling. A similar idea can be applied to Bayesian maximum a posteriori estimates.

SDEs are more realistic than traditional approaches in longitudinal data analysis and other applications, but wider acceptance has been hampered by mathematical and computational complexity. Linear SDEs allow explicit solutions, and are suitable for compartmental and other models \citep[][Sec.~8.8]{cleur00,kristensen05,klim09,favetto10,cuenod11,driver17,seber-wild}. However, they may not be appropriate in more general situations. Much more flexibility is achieved by allowing nonlinear transformations of variables in linear SDEs. The resulting so-called reducible SDEs are mathematically tractable, and extensions of the Furnival-Box-Cox method make parameter estimation possible using standard nonlinear least-squares or mixed-effects software.

Mixed-effects modelling has become a \emph{de facto} requirement for publication in many scientific journals, but the underlying assumptions are not always justifiable. Often, a model with group fixed-effects may be preferable. A somewhat obscure feature of the \texttt{nls} function in \textit{R} makes the implementation of such fixed parameters models particularly simple.

The proposed methods have been discussed and demonstrated in detail for univariate SDEs. Results match those computed by direct maximization of the likelihood functions. Extensions to the multivariate case are outlined more briefly. There are no conceptual difficulties in these, although the best choice of implementation specifics is not obvious, and likely to be problem-dependent.

Being able to use well tested and familiar standard software should make SDE modelling more transparent and accessible. Within the class of reducible SDEs, the methodology can also provide more flexibility in model specification compared to special-purpose software packages. Ease of use and understanding may contribute to a wider adoption of these models.

\section*{Acknowledgements}
\label{sec:ack}
\addcontentsline{toc}{section}{Acknowledgements}

I am grateful to an anonymous reviewer and to Dr.~Lance Broad
for thoughtful and detailed comments that contributed to improve the text.

\appendix

\section*{Appendices --- \textit{R} computer code}
\label{app:R}


\section{Example 1, Section \ref{sec:sol1}}
\label{app:sol1}

{\small\begin{verbatim}
> bc <- function(y, lambda){ # Box-Cox transformation
+     if (abs(lambda) < 1e-9) log(y)
+     else (y^lambda - 1) / lambda
+ }
> x <- MASS::GAGurine$Age # data
> y <- MASS::GAGurine$GAG
> lm(log(y) ~ I(x - 1)) # find suitable starting points
> # (with lambda.y = 0, lambda.x = 1)

Call:
lm(formula = log(y) ~ I(x - 1))

Coefficients:
(Intercept)     I(x - 1)
     2.8522      -0.1139

> # Minimize sum of squares:
> gm <- exp(mean(log(y)))  # geometric mean
> nls(rep(0, length(y)) ~ (bc(y, lambda.y) - beta0 - beta1 *
+ bc(x, lambda.x)) / gm^(lambda.y - 1), start = c(beta0=2.9,
+ beta1=-0.11, lambda.x=1, lambda.y=0))
Nonlinear regression model
  model: rep(0, length(y)) ~ (bc(y, lambda.y) - beta0 - beta1 * bc(x,
       lambda.x))/gm^(lambda.y - 1)
   data: parent.frame()
   beta0    beta1 lambda.x lambda.y
  3.3142  -0.3502   0.4249   0.1032
 residual sum-of-squares: 3214

Number of iterations to convergence: 10
Achieved convergence tolerance: 1.808e-06
> # Estimate for sigma:
> n <- length(y)
> gm^(0.1032 - 1) * sqrt(3214 / n)
[1] 0.383853
>
> # Round the lambdas for a more parsimonious model:
> lm(log(y) ~ sqrt(x))

Call:
lm(formula = log(y) ~ sqrt(x))

Coefficients:
(Intercept)      sqrt(x)
     3.3665      -0.5069
\end{verbatim}}

The results are verified by direct minimization of the negative log-likelihood from eq.~\eqref{eq:Lraw},
\[
  -\ln L = \tfrac{n}{2} \ln(2 \pi \sigma^2) +
    \frac{\sum \epsilon_i^2}{2 \sigma^2} - \ln J \;:
\]
{\small\begin{verbatim}
> neglogL <- function(theta){# theta = (beta0,beta1,lambda.x,lambda.y,sigma)
+ (n/2) * log(2 * pi * theta[5]^2) + sum((bc(y, theta[4]) - theta[1] -
+ theta[2] * bc(x, theta[3]))^2) / (2*theta[5]^2) - n * (theta[4] - 1) *
+ log(gm)}
> optim(par = c(3, -0.4, 0.4, 0.1, 0.4), fn = neglogL)
$par
[1]  3.3140471 -0.3502047  0.4248816  0.1031103  0.3837414

$value
[1] 810.6863

$counts
function gradient
     422       NA

$convergence
[1] 0

$message
NULL
\end{verbatim}}

\section{Functions for ML parameter estimation in SDE models, Section \ref{sec:comp}}
\label{app:sde}

The following function computes $\ln \lvert\mat L\rvert$ and $\vc v = \mat L^{-1} \vc z$.
{\small\begin{verbatim}
logdet.and.v <- function(
# Calculate log(det(L)) and v = L^-1 z
  cdiag,  # vector with the diagonal elements of C, c[i, i]
  csub,   # vector with sub-diagonal c[i, i-1] for i > 1
  z)
{
  v <- z
  ldiag <- sqrt(cdiag[1])
  logdet <- log(ldiag)
  v[1] <- z[1] / ldiag
  for (i in 2:length(z)) {
    lsub <- csub[i] / ldiag
    ldiag <- sqrt(cdiag[i] - lsub^2)
    logdet <- logdet + log(ldiag)
    v[i] <- (z[i] - lsub * v[i - 1]) / ldiag
  }
  list(logdet = logdet, v = v)
}
\end{verbatim}}

The function \texttt{uvector} that follows computes the vector $\vc u$, given \texttt{logdet.and.v} and user-supplied functions \texttt{phi(x, theta)} for the transformation $x \to y$ and \texttt{phiprime(x, theta)} for its derivative.
Once $\sum u_i$ is minimized, \texttt{uvector} can be used to calculate the ML estimates for $\sigma_p, \sigma_m$ and $\sigma_0$, and the maximized log-likelihood value, if \texttt{final = TRUE} and the optimal parameter values are passed in the arguments.
{\small\begin{verbatim}
uvector <- function(
  # If final = FALSE (default), returns vector whose sum of squares
  #   should be minimized.
  # If final = TRUE, passing the ML parameter estimates returns a
  #   list with the sigma estimates and the maximized log-likelihood.
  # Requires a transformation function y = phi(x, theta), and a
  #   function phiprime(x, theta) for the derivative dy/dx, where
  #   theta is a list containing all the parameters.
  x, t,  # data vectors
  beta0, beta1, eta, eta0, x0, t0,  # SDE parameters
  lambda,  # named list with additional parameter(s) for phi
  final = FALSE)
{
  theta <- c(list(beta0=beta0, beta1=beta1, eta=eta, eta0=eta0,
                  x0=x0, t0=t0), lambda)
  s <- order(t); t <- t[s]; x <- x[s] # ensure increasing t
  n <- length(x)
  y <- phi(x, theta)
  y0 <- phi(x0, theta)
  Dt <- diff(c(t0, t))
  if (beta1 != 0) {
    ex <- exp(beta1 * Dt)
    ex2 <- ex^2
    z <- y + beta0 / beta1 - ex * (c(y0, y[-n]) + beta0 / beta1)
    cdiag <- ex2 * eta + eta + (1 - eta) * (ex2 - 1) / (2 * beta1)
    cdiag[1] <- cdiag[1] - ex2[1] * (eta - eta0)
    csub <- -ex * eta
  } else { # beta1 == 0
    z <- y - c(y0, y[-n]) - beta0 * Dt
    cdiag <- 2 * eta + (1 - eta) * Dt
    cdiag[1] <- cdiag[1] - eta + eta0
    csub <- -rep(eta, n)
  }
  ld.v <- logdet.and.v(cdiag, csub, z)
  logJ <- sum(log(abs(phiprime(x, theta)))) - ld.v$logdet
  Jn <- exp(logJ / n)   # J^(1/n)
  u <- ld.v$v / Jn
  if (!final) return (u)  # "normal" exit
  # Else, at optimum, calculate sigma.p, sigma.m and sigma.0
  #   estimates, and the log-likelihood:
  ms <- sum(u^2) / n  # mean square
  sigma2 <- Jn^2 * ms  # estimate for sigma^2
  list(sigma.p = sqrt((1 - eta) * sigma2), sigma.m = sqrt(eta *
       sigma2), sigma.0 = sqrt(eta0 * sigma2), loglikelihood =
       -(n / 2) * (log(ms) + log(2 * pi) + 1))
}
\end{verbatim}}

\section{Example 2, Section \ref{sec:ex2}}
\label{app:ex2}

Extract tree 301:
{\small\begin{verbatim}
> (lob301 <- Loblolly[Loblolly$Seed == 301, ])
   height age Seed
1    4.51   3  301
15  10.89   5  301
29  28.72  10  301
43  41.74  15  301
57  52.70  20  301
71  60.92  25  301
\end{verbatim}}

\subsection{Additive process noise}
\label{app:add}

The functions for the transformation $H^c$ and its derivative $c H^{c-1}$ are
{\small\begin{verbatim}
> phi <- function(H, theta) H^theta$c
> phiprime <- function(H, theta) with(theta, c * H^(c - 1))
\end{verbatim}}
To facilitate convergence, start by fixing $\eta = 0.5$ ($\eta$ is the relative measurement variance from eqns.~\eqref{eq:sigma2}--\eqref{eq:eta}).
Minimizing the sum of squares of $\vc u$ using \texttt{nls}, with an initial guess $a = 70$, $b = 0.1$, $c = 1$,
{\small\begin{verbatim}
> nls(~ uvector(x = height, t = age, beta0 = b * a^c, beta1 = -b, eta = 0.5,
+  eta0 = 0, x0 = 0, t0 = 0, lambda = list(c = c)), data = lob301,
+  start = list(a = 70, b = 0.1, c = 1))
Nonlinear regression model
  model: 0 ~ uvector(x = height, t = age, beta0 = b * a^c, beta1 = -b,
         eta = 0.5, eta0 = 0, x0 = 0, t0 = 0, lambda = list(c = c))
   data: lob301
       a        b        c
71.96058  0.09947  0.49217
 residual sum-of-squares: 1.829

Number of iterations to convergence: 6
Achieved convergence tolerance: 1.145e-06
\end{verbatim}}

The option \texttt{algorithm = "port"} allows upper and lower bounds on the parameters, and we use it now for $0 \leq \eta \leq 1$:
{\small\begin{verbatim}
> nls(~ uvector(x = height, t = age, beta0 = b * a^c, beta1 = -b, eta = eta,
+ eta0 = 0, x0 = 0, t0 = 0, lambda = list(c = c)), data = lob301,
+ start = list(a = 70, b = 0.1, c = 0.5, eta = 0.5), algorithm = "port",
+ lower = c(0, 0, 0, 0), upper = c(100, 1, 2, 1))
Nonlinear regression model
  model: 0 ~ uvector(x = height, t = age, beta0 = b * a^c, beta1 = -b,
         eta = eta, eta0 = 0, x0 = 0, t0 = 0, lambda = list(c = c))
   data: lob301
      a       b       c     eta
72.5459  0.0967  0.5024  1.0000
 residual sum-of-squares: 1.327

Algorithm "port", convergence message: relative convergence (4)
\end{verbatim}}

Finally, calculate the variance estimates and the maximized log-likelihood:
{\small\begin{verbatim}
> uvector(x = lob301$height, t = lob301$age, beta0 = 0.0967 * 72.5459^0.5024,
+ beta1 = -0.0967, eta = 1, eta0 = 0, x0 = 0, t0 = 0,
+ lambda = list(c = 0.5024), final = TRUE)
$sigma.p
[1] 0

$sigma.m
[1] 0.04865072

$sigma.0
[1] 0

$logLikelihood
[1] -3.9882
\end{verbatim}}

\subsection{Multiplicative process noise}
\label{app:mult}

{\small\begin{verbatim}
> phi <- function(H, theta)
+   with(theta, log(abs(a^c - H^c)))
> phiprime <- function(H, theta)
+   with(theta, - c * H^(c - 1) / (a^c - H^c))
> nls(~ uvector(x = height, t = age, beta0 = -b, beta1 = 0, eta = eta,
+ eta0 = 0, x0 = 0, t0 = 0, lambda = list(a = a, c = c)), data = lob301,
+ start = list(a = 72, b = 0.1, c = 0.5, eta = 0.5), algorithm = "port",
+ lower = c(0, 0, 0, 0), upper = c(100, 1, 2, 1))
Nonlinear regression model
  model: 0 ~ uvector(x = height, t = age, beta0 = -b, beta1 = 0, eta = eta,
         eta0 = 0, x0 = 0, t0 = 0, lambda = list(a = a, c = c))
   data: lob301
       a        b        c      eta
77.10687  0.08405  0.54946  1.00000
 residual sum-of-squares: 1.154

Algorithm "port", convergence message: relative convergence (4)
> uvector(x = lob301$height, t = lob301$age, beta0 = -0.08405, beta1 = 0,
+ eta = 1, eta0 = 0, x0 = 0, t0 = 0,
+ lambda = list(a = 77.10687, c = 0.54946), final = TRUE)
$sigma.p
[1] 0

$sigma.m
[1] 0.01576683

$sigma.0
[1] 0

$logLikelihood
[1] -3.568224
\end{verbatim}}

\section{ML estimation in hierarchical SDE models, Section \ref{sec:hier}}
\label{app:hier}

Extending the function \texttt{uvector} of \ref{app:sde},
{\small\begin{verbatim}
> uvector
function(
  # If final = FALSE (default), returns vector whose sum of squares
  #   should be minimized.
  # If final = TRUE, passing the ML parameter estimates returns a
  #   list with the sigma estimates and the maximized log-likelihood.
  # Requires a transformation function y = phi(x, theta), and a
  #   function phiprime(x, theta) for the derivative dy/dx, where
  #   theta is a list containing all the parameters.
  x, t, unit = NULL,  # data and unit id vectors
  beta0, beta1, eta, eta0, x0, t0,  # SDE parameters. Some of these
  #  may be local, given as a vector of values for each observation
  lambda,  # named list of additional parameters(s) for phi,
  #  possibly local vectors
  mum = 1, mu0 = 1, mup = 1,  # optional sigma multipliers
  sorted = FALSE,  # data already ordered by increasing t?
  final = FALSE)
{
  if(is.null(unit)) unit <- rep(1, length(x))  # single unit
  if(length(unique(eta)) > 1 || length(unique(eta0)) > 1)
    stop("eta and eta0 must be global")
  theta <- data.frame(unit, beta0, beta1, eta, eta0, x0, t0,
    lambda, mum, mu0, mup)[!duplicated(unit), ]  # one row per unit
  v <- c()
  n <- logJ <- 0
  for(id in theta$unit){
    theta.j <- theta[match(id, theta$unit), ]
    j <- unit == id
    x.j <- x[j]
    t.j <- t[j]
    if(!sorted){ # ensure increasing t
      s <- order(t.j)
      x.j <- x.j[s]
      t.j <- t.j[s]
    }
    n.j <- length(x.j)
    y <- phi(x.j, theta.j)
    y0 <- phi(theta.j$x0, theta.j)
    Dt <- diff(c(theta.j$t0, t.j))
    muetam <- theta.j$mum^2 * theta.j$eta
    mueta0 <- theta.j$mu0^2 * theta.j$eta0
    muetap <- theta.j$mup^2 * (1 - theta.j$eta)
    if (theta.j$beta1 != 0) {
      ex <- exp(theta.j$beta1 * Dt)
      ex2 <- ex^2
      z <- y + theta.j$beta0 / theta.j$beta1 - ex *
        (c(y0, y[-n.j]) + theta.j$beta0 / theta.j$beta1)
      cdiag <- (ex2 + 1) * muetam + muetap * (ex2 - 1) /
        (2 * theta.j$beta1)
      cdiag[1] <- cdiag[1] - ex2[1] * (muetam - mueta0)
      csub <- -ex * muetam
    } else { # beta1 == 0
      z <- y - c(y0, y[-n.j]) - theta.j$beta0 * Dt
      cdiag <- 2 * muetam + muetap * Dt
      cdiag[1] <- cdiag[1] - muetam + mueta0
      csub <- rep(- muetam, n.j)
    }
    ld.v <- logdet.and.v(cdiag, csub, z)
    v <- c(v, ld.v$v)
    logJ <- logJ + sum(log(abs(phiprime(x.j, theta.j)))) -
      ld.v$logdet
    n <- n + n.j
  }
  if(n != length(x)) stop("Should not happen, something wrong!")
  Jn <- exp(logJ / n)   # J^(1/n)
  u <- v / Jn
  if (!final) return (u)  # "normal" exit
  # Else, at optimum, calculate sigma.P, sigma.M and sigma.Z
  #   estimates, and the log-likelihood:
  ms <- sum(u^2) / n  # mean square
  sigma2 <- Jn^2 * ms  # estimate for sigma^2
  list(sigma.P = sqrt((1 - eta) * sigma2),
       sigma.M = sqrt(eta * sigma2),
       sigma.Z = sqrt(eta0 * sigma2),
       logLikelihood = -(n / 2) * (log(ms) + log(2 * pi) + 1))
}
\end{verbatim}}
For clarity, a simple \texttt{for} loop has been used rather than potentially more efficient or elegant \textit{R}-specific shortcuts. This \texttt{uvector} is backward compatible with the previous version, giving the same results for single units.

\section{Example 3, Section \ref{sec:ex3}}
\label{app:ex3}

The transformation functions are
{\small\begin{verbatim}
> phi <- function(H, theta) with(theta,
+   ifelse(abs(rep_len(c, length(H))) < 1e-9,
+   log(H / a), ((H / a)^c - 1) / c))
> # (works for vectors and scalars)
> phiprime <- function(H, theta) with(theta,
+   (H / a)^(c - 1) / a)
\end{verbatim}}

First, take $a$ as local. Local parameters in \texttt{nls} are indexed by a factor identifying the units, \texttt{Seed} in this case.  A vector with one element per observation is passed on to the model function. On the other hand, values for the factor levels, one element per unit, are used in the starting values and output. The ML estimates are:
{\small\begin{verbatim}
> (alocal <- nls(~uvector(x = height, t = age, unit = Seed,
+ beta0 = 0, beta1 = -b, eta = 0, eta0 = 0, x0 = 0, t0 = 0,
+ lambda = list(a = a[Seed], c = c), mup = sqrt(abs(b))),
+ data = Loblolly, start = list(a = rep(72, 14), b = 0.1, c = 0.5)))
Nonlinear regression model
  model: 0 ~ uvector(x = height, t = age, unit = Seed, beta0 = 0,
       beta1 = -b, eta = 0, eta0 = 0, x0 = 0, t0 = 0, lambda =
       list(a = a[Seed], c = c), mup = sqrt(abs(b)))
   data: Loblolly
      a1       a2       a3       a4       a5       a6       a7       a8
68.36651 69.11596 71.87593 70.69002 70.44039 71.38285 72.90628 70.92199
      a9      a10      a11      a12      a13      a14        b        c
74.01902 74.77264 75.44943 76.41765 76.91871 78.84126  0.09472  0.49182
 residual sum-of-squares: 40.35

Number of iterations to convergence: 3
Achieved convergence tolerance: 5.947e-06
> p <- unname(coef(alocal))  # parameter estimates
> uvector(x = Loblolly$height, t = Loblolly$age, unit = Loblolly$Seed,
+ beta0 = 0, beta1 = - p[15], eta = 0, eta0 = 0, x0 = 0,
+ t0 = 0, lambda = list(a = p[Loblolly$Seed], c = p[16]), mup =
+ sqrt(abs(p[15])), final = TRUE)
$sigma.P
[1] 0.03358892

$sigma.M
[1] 0

$sigma.Z
[1] 0

$logLikelihood
[1] -88.39581

> logLik(alocal)  # another way
'log Lik.' -88.39581 (df=17)
> AIC(alocal)  # AIC or BIC for comparing models with different
[1] 210.7916
> BIC(alocal)  #   numbers of parameters (lower is better)
[1] 252.1155
\end{verbatim}}

Now try $a$ global and $b$ local:
{\small\begin{verbatim}
> (blocal <- nls(~uvector(x = height, t = age, unit = Seed,
+ beta0 = 0, beta1 = -b[Seed], eta = 0, eta0 = 0, x0 = 0, t0 = 0,
+ lambda = list(a = a, c = c), mup = sqrt(abs(b[Seed]))),
+ data = Loblolly, start = list(a = 72, b = rep(0.1, 14), c = 0.5)))
Nonlinear regression model
  model: 0 ~ uvector(x = height, t = age, unit = Seed, beta0 = 0,
       beta1 = -b[Seed], eta = 0, eta0 = 0, x0 = 0, t0 = 0,
       lambda = list(a = a, c = c), mup = sqrt(abs(b[Seed])))
   data: Loblolly
       a       b1       b2       b3       b4       b5       b6       b7
73.08143  0.08912  0.09082  0.09495  0.09053  0.08915  0.09111  0.09496
      b8       b9      b10      b11      b12      b13      b14        c
 0.08957  0.09680  0.09819  0.09843  0.09984  0.09984  0.10313  0.49156
 residual sum-of-squares: 37.35

Number of iterations to convergence: 4
Achieved convergence tolerance: 2.256e-06
> p <- unname(coef(blocal))
> uvector(x = Loblolly$height, t = Loblolly$age, unit = Loblolly$Seed,
+ beta0 = 0, beta1 = -(p[-1])[Loblolly$Seed], eta = 0, eta0 = 0,
+ x0 = 0, t0 = 0, lambda = list(a = p[1], c = p[16]),
+ mup = sqrt(abs((p[-1])[Loblolly$Seed])), final = TRUE)
$sigma.P
[1] 0.03231109

$sigma.M
[1] 0

$sigma.Z
[1] 0

$logLikelihood
[1] -85.15201

> c(AIC(blocal), BIC(blocal))
[1] 204.3040 245.6279
\end{verbatim}}

Finally, with both $a$ ad $b$ locals,
\begin{verbatim}
> (ablocal <- nls(~uvector(x = height, t = age, unit = Seed,
+ beta0 = 0, beta1 = - b[Seed], eta = 0, eta0 = 0, x0 = 0, t0 = 0,
+ lambda = list(a = a[Seed], c = c), mup = sqrt(abs(b[Seed]))),
+ data = Loblolly, start = list(a = rep(72, 14), b = rep(0.1, 14),
+ c = 0.5)))
Nonlinear regression model
  model: 0 ~ uvector(x = height, t = age, unit = Seed, beta0 = 0,
       beta1 = -b[Seed], eta = 0, eta0 = 0, x0 = 0, t0 = 0,
       lambda = list(a = a[Seed], c = c), mup = sqrt(abs(b[Seed])))
   data: Loblolly
      a1       a2       a3       a4       a5       a6       a7
68.31203 67.27849 68.61879 73.81971 75.66951 75.05226 72.12885
      a8       a9      a10      a11      a12      a13      a14
76.93856 72.22306 72.02841 73.72472 74.01390 75.85147 76.05177
      b1       b2       b3       b4       b5       b6       b7
 0.09497  0.09821  0.10087  0.08990  0.08659  0.08913  0.09631
      b8       b9      b10      b11      b12      b13      b14
 0.08571  0.09807  0.09972  0.09784  0.09888  0.09668  0.09960
       c 
 0.49060 
 residual sum-of-squares: 30.67

Number of iterations to convergence: 4 
Achieved convergence tolerance: 6.068e-07
> logLik(ablocal)
'log Lik.' -76.87568 (df=30)
> c(AIC(ablocal), BIC(ablocal))
[1] 213.7514 286.6759
\end{verbatim}

\section{Example 4, Section \ref{sec:ex4}}
\label{app:ex4}

Fit the $b$-local version from Example 3 as a mixed-effects model:
{\small\begin{verbatim}
> library(nlme)
> (blocal.nlme <- nlme(0 ~ uvector(x = height, t = age, unit = Seed,
+ beta0 = 0, beta1 = -b, eta = 0, eta0 = 0, x0 = 0, t0 = 0,
+ lambda = list(a = a, c = c), mup = sqrt(abs(b))), data = Loblolly,
+ fixed = a + b + c ~ 1, random = b ~ 1, groups = ~Seed,
+ start = c(a = 72, b = 0.1, c = 0.5),
+ control = nlmeControl(pnlsTol = 0.01)))
Nonlinear mixed-effects model fit by maximum likelihood
  Model: 0 ~ uvector(x = height, t = age, unit = Seed, beta0 = 0,
         beta1 = -b, eta = 0, eta0 = 0, x0 = 0, t0 = 0,
         lambda = list(a = a, c = c), mup = sqrt(abs(b)))
  Data: Loblolly
  Log-likelihood: -101.7804
  Fixed: a + b + c ~ 1
          a           b           c
73.43276915  0.09381183  0.49381270

Random effects:
 Formula: b ~ 1 | Seed
                  b  Residual
StdDev: 0.003814812 0.7307142

Number of Observations: 84
Number of Groups: 14
\end{verbatim}}
To achieve convergence it was necessary to increase the tolerance \texttt{pnlsTol} from the default 0.001. It may be wise to scale the variables, e.~g., specifying \texttt{x = height / 10, t = age / 10} and adjusting the \texttt{start} values accordingly, to make quantities closer to 1 as recommended in the \texttt{nlme} documentation. That did not make much difference in this instance.

One can obtain \emph{predictions} for the units (not \emph{estimates}, since here the locals are random variables and not parameters):
{\small\begin{verbatim}
> coef(blocal.nlme)
           a          b         c
329 73.43277 0.08972720 0.4938127
327 73.43277 0.09095265 0.4938127
325 73.43277 0.09400202 0.4938127
307 73.43277 0.09087038 0.4938127
331 73.43277 0.08972664 0.4938127
311 73.43277 0.09129559 0.4938127
315 73.43277 0.09407617 0.4938127
321 73.43277 0.09010012 0.4938127
319 73.43277 0.09536543 0.4938127
301 73.43277 0.09631403 0.4938127
323 73.43277 0.09647272 0.4938127
309 73.43277 0.09739088 0.4938127
303 73.43277 0.09743172 0.4938127
305 73.43277 0.09964012 0.4938127
>
> c(AIC(blocal.nlme), BIC(blocal.nlme))
[1] 213.5608 225.7149
> c(AIC(blocal), BIC(blocal))
[1] 204.3040 245.6279
\end{verbatim}}

\bibliographystyle{spbasic}      
\bibliography{sdes3}   

\addcontentsline{toc}{section}{References}

\end{document}